\newtheorem{theorem}{Theorem}[section]
\newtheorem{proposition}[theorem]{Proposition}
\theoremstyle{definition}
\newtheorem{definition}[theorem]{Definition}
\newcommand{\tr}{\operatorname{tr}}
\begin{document}

\title{Detailed balance and entanglement}
\author{Rocco Duvenhage and Machiel Snyman\\
Department of Physics, University of Pretoria, Pretoria 0002 \\South Africa }
\date{18 March 2015}
\maketitle

\begin{abstract}
We study a connection between quantum detailed balance, which is a concept
of importance in statistical mechanics, and entanglement. We also explore
how this connection fits into thermofield dynamics.

\end{abstract}

PACS numbers: 03.65.Yz, 05.30.-d, 03.65.Aa, 03.65.Ud

\section{Introduction}

Entanglement is a central aspect of quantum physics. It is for example
well established as a core concept in the broad field of quantum information
\cite{H}. It has also become clear that it has important applications in other
areas of physics. One such area where much work has been done recently is
statistical mechanics. See for example the book \cite{GMM} and the reviews
\cite{AmE}, as well as the papers \cite{GL} for various ideas that have
been explored in this connection. It is therefore of interest to explore
further general connections between entanglement and statistical
mechanics. In particular in this paper we consider a connection to detailed
balance.

Detailed balance is a form of microscopic reversibility and is intimately
related to equilibrium. Quantum versions of detailed balance for open systems,
which is what we are interested in this paper, have been studied
for many years, one of the earliest papers being \cite{Ag}.
Other early work includes \cite{CWA, KFGV}. This line of research continues in
the present day as seen for example in \cite{FU2}, and includes studies of
related aspects of dynamics, like mixing times, \cite{TeTe}. There are various
approaches to quantum detailed balance with varying degrees of generality, as
illustrated by the mentioned papers.

Connections between detailed balance and entangled states have in fact already
been exploited in \cite{FR, FR2, BQ2} with regards to entropy production for
quantum Markov semigroups (also see \cite{BQ} for related work). Here our goal
is to study this connection itself more explicitly, in particular how it arises
as well as one instance of how it fits into other parts of physics,
specifically the area known as thermofield dynamics.

We only consider systems with finite dimensional Hilbert space in this paper. The
relevant concepts regarding entanglement, in particular a convenient
representation of purifications, are presented in Section \ref{a2}.
A heuristic motivation as to why one might in general expect a connection between
detailed balance and entanglement is presented in Section \ref{mot}. Two
definitions of quantum detailed balance, one of which was also considered in
\cite{FR, FR2, BQ2}, are discussed in Section \ref{a3}.
The characterization of these forms of detailed balance in terms of a
certain entangled state is then described in Section \ref{a4}, and proved in
Section \ref{bewyse}. In Section \ref{tvd} we show how these results fit naturally
into thermofield dynamics. Further general remarks are made in
Section \ref{a6}.

\section{Entanglement}

\label{a2}

Here we set up a representation of the purification of a state, which will be
convenient when we study the connection between detailed balance and
entanglement in Section \ref{a4}. At the same time we introduce some notation
that will be used in the rest of the paper.

Consider a quantum system with $n\geq2$ dimensional Hilbert space whose state
is given by the density matrix $\rho$. The expectation value of an observable
$A$ of the system is therefore given by
\[
\left\langle A\right\rangle =\tr(\rho A).
\]
For mathematical convenience we define this functional $\left\langle
\cdot\right\rangle $ on the whole of the algebra $M_{n}$ of $n\times n$
complex matrices, rather than just on the self-adjoint matrices. Note that
$\rho$ can be recovered from $\left\langle \cdot\right\rangle $ so we may view
$\left\langle \cdot\right\rangle $ as a representation of the system's state.
Denoting the Hilbert-Schmidt inner product by $(\cdot|\cdot)$, we have
\[
\left\langle A\right\rangle =\tr(r^{\dagger}Ar)=(r|Ar)
\]
for any $n\times n$ matrix $r$ such that $\rho=rr^{\dagger}$. Note that such
matrices $r$ exist exactly because $\rho\geq0$.

We introduce a faithful representation $\pi$ of the tensor product
$M_{n}\otimes M_{n}$ on the space $M_n$ by
\begin{equation}
\pi(A\otimes B)X=AXB^{\intercal} \label{voorst}
\end{equation}
where $B^{\intercal}$ is the transpose of the matrix $B$, while $X$ is any
element of the representation space $M_n$. Note that this representation 
depends on the basis we are using, because of the transpose. Keep in mind that 
$\pi$ is well defined on the whole of $M_{n}\otimes M_{n}$ because of the 
universal property of tensor products. We can view $\pi$ as faithfully 
representing $M_{n}\otimes M_{n}$ on the Hilbert space $M_{n}$ with the 
Hilbert-Schmidt norm, and in particular this Hilbert space can be taken as the 
Hilbert space of two copies of the system together, which we call the 2-system. 
So, if $X$ in Eq. (\ref{voorst}) is a normalized element of the Hilbert space 
$M_n$, then it represents a pure state of the 2-system. A way to see all this 
easily is to represent a pure state of the first system as a column vector 
$\psi$ in the $n$ dimensional Hilbert space, but to take the transpose of a 
pure state $\phi$ of the second system to get a row vector $\phi^{\intercal}$, 
in which case the elementary tensor $\psi\otimes\phi$ can be written as the 
matrix product
\[
\psi\phi^{\intercal}=\left[
\begin{array}
[c]{c}
\psi_{1}\\
\vdots\\
\psi_{n}
\end{array}
\right]  \left[
\begin{array}
[c]{ccc}
\phi_{1} & \cdots & \phi_{n}
\end{array}
\right]  ,
\]
since this is simply the Kronecker product of the two pure states (in terms of
their components $\psi_{1},...,\psi_{n}$ and $\phi_{1},...,\phi_{n}$
respectively), represented as an $n\times n$ matrix. The general pure state
$X$ of the 2-system is simply a linear combination of such elementary tensors.
In this representation it is clear that when $A\otimes B$ acts on $\psi
\otimes\phi$, i.e. when $A$ acts on $\psi$ and $B$ on $\phi$, then it is
represented by
\[
(A\psi)(B\phi)^{\intercal}=A\psi\phi^{\intercal}B^{\intercal}
\]
which extends linearly to Eq. (\ref{voorst}) for general pure states $X$ of the
2-system.

Using this representation and viewing $r$ above as a pure state of the 2-system, 
we define the corresponding expectation functional $\omega$ on $M_{n}\otimes
M_{n}$ by
\begin{equation}
\omega_{r}(A\otimes B)=(r|\pi\lbrack A\otimes B]r)
=\tr(r^{\dagger}ArB^{\intercal}). \label{omegar}
\end{equation}
We use the notation $\omega_{r}$ rather than, say,
$\left\langle\cdot\right\rangle _{r}$,
to distinguish it more clearly from $\left\langle\cdot\right\rangle $,
especially later on when we drop the subscript $r$. We can therefore view
$\omega_{r}$ as a pure state of the 2-system (represented as an expectation
functional), and since in terms of the $n\times n$ identity matrix $I$ we
clearly have
\[
\omega_{r}(A\otimes I)=\left\langle A\right\rangle
\]
where the left hand side corresponds to taking a partial trace, we see that
$\omega_{r}$ is a purification of $\left\langle \cdot\right\rangle $, i.e. the
state $r$ in the Hilbert space $M_{n}$ is a purification of $\rho$. (At this
stage we have not assumed that $\rho$ is necessarily mixed, but we will do so
later.) This construction of $\omega_{r}$ is closely related to constructions
used in \cite{WFD}, but the specific representation Eq. (\ref{voorst}) is
different, and in the mentioned references the tensor product of two slightly
different algebras are taken instead of two copies of the same algebra $M_{n}$
as in our case.

As already mentioned, $\omega_{r}$ depends on the basis in which are working,
but the fact that we allow any $r$ such that $\rho=rr^{\dagger}$, in effect
compensates for this, as we now explain. If we were to change the basis we are
working in by a unitary transformation $V$, i.e. $A$, $B$ and $\rho$ are
replaced by $V^{\dagger}AV$, $V^{\dagger}BV$ and $V^{\dagger}\rho V$
respectively, so in particular we would use $r$ such that
$rr^{\dagger}=V^{\dagger}\rho V$, then the definition of $\omega_{r}$ would
change to
\[
\omega_{r}(A\otimes B)
=\tr[r^{\dagger}V^{\dagger}AVr(V^{\dagger}BV)^{\intercal}]
=\tr(r_{V}^{\dagger}Ar_{V}B^{\intercal})
\]
where $r_{V}=VrV^{\intercal}$ which clearly satisfies
$r_{V}r_{V}^{\dagger}=\rho$, so we are back to the original definition,
expressed in the original basis, by making a different choice of $r$, namely
$r_{V}$.

Without loss of generality we can therefore assume that in Eq. (\ref{omegar})
we are working in a basis in which $\rho$ is diagonal, which is indeed what we
do in the rest of the paper. Furthermore, it is easily shown that the most
general form for such $r$ is $r=\rho^{1/2}W$ where $W$ is any $n\times n$
unitary matrix.

In the rest of this paper we focus on the choice $r=\rho^{1/2}$ in which case
we denote $\omega_{r}$ simply by $\omega$, i.e.
\begin{equation}
\omega(A\otimes B)=\tr(\rho^{1/2}A\rho^{1/2}B^{\intercal})
\label{omega}
\end{equation}
with $\rho$ diagonal. The reason for this is that it ensures that
\[
\omega(I\otimes B)=\left\langle B\right\rangle ,
\]
i.e. both copies of the system are in the same state $\rho$. More generally
this can be ensured by requiring not only $rr^{\dagger}=\rho$, but also
$r^{\dagger}r=\rho$, since $\rho^{\intercal}=\rho$, however
$r=\rho^{1/2}$ is the simplest option.

To summarize, $\omega$ is a pure state of the 2-system whose reduced states to
both systems are given by $\left\langle \cdot\right\rangle $, i.e. by $\rho$,
and since in statistical mechanics we are particularly interested in cases
where $\rho$ is not pure, it follows then that $\omega$ is an entangled state.

Throughout the rest of the paper we in fact assume that $\rho$ is invertible,
i.e. all its eigenvalues are strictly positive. In particular $\rho$ is not a
pure state, and therefore the pure state $\omega$ is entangled.

\section{Detailed balance and correlated states}

\label{mot}

Next we present a somewhat heuristic discussion of why a connection between
quantum detailed balance and entanglement can be expected. In order to do
this we start with detailed balance for a classical Markov chain and show
how it can be expressed in terms of a correlated state of two copies of the
system in question, where both of its marginals are the original state of the
system.

Recall that if we have a probability distribution $p_{1},...,p_{n}$ over a
finite set $F$ of $n$ elements, then a Markov chain satisfying detailed
balance is described by transition probabilities $\gamma _{jk}$ satisfying
\begin{equation*}
p_{j}\gamma _{jk}=p_{k}\gamma _{kj}
\end{equation*}
for all $j,k=1,...,n$, which simply says that the probability to make a
transition from one pure state to another is the same as the opposite
transition. Denoting the observable algebra of functions on the $n$-point
set $F$ by $K=\mathbb{C}^{n}$, we can express the probability distribution
$p_{1},...,p_{n}$ by a normalized positive linear functional (a \emph{state}) $\mu $
on $K$ given by
\begin{equation*}
\mu (f)=pf
\end{equation*}
where $f\in K$ is viewed as a column matrix and
$p=[
\begin{array}{ccc}
p_{1} & \cdots & p_{n}
\end{array}
]$ is a row matrix. Now we consider two copies of the algebra $K$, namely
the tensor product algebra $K\otimes K$ and define a state $\varphi $ on it
by
\begin{equation*}
\varphi =\mu \circ \delta
\end{equation*}
where $\delta :K\otimes K\rightarrow K$ is given by componentwise
multiplication, i.e. $\delta (f\otimes g)=fg$ where $fg$ is the product in
the algebra $K$, defined to have the components $f_{j}g_{j}$ if $f$ and $g$
have components $f_{j}$ and $g_{j}$ respectively. Note that $\delta $ is
well-defined because of the universal property of the tensor product. It is
clear that $\varphi $ corresponds to the probability distribution
$p_{1},...,p_{n}$ over the ``diagonal'' of the set $F\times F$ and is
therefore a correlated state unless all but one of the probabilities are
zero. Note that analogous to the entangled state $\omega $ from the previous
section, the marginals of $\varphi $ are simply the state $\mu $ of the
single system we started with, namely
\begin{equation*}
\varphi (f\otimes 1)=\mu (f)\text{ \ \ and \ \ }\varphi (1\otimes g)=\mu (g)
\end{equation*}
where the $1$ here denotes the function which is identically $1$ on $F$,
i.e. the column consisting only of $1$'s.

Denoting the transition matrix by $\Gamma =(\gamma _{jk})$, the
time-evolution on $K$ is given by $f\mapsto \Gamma f$, and using the
detailed balance condition above it follows that
\begin{equation*}
\varphi ((\Gamma f)\otimes g)=\sum_{j=1}^{n}\sum_{k=1}^{n}p_{j}\gamma
_{jk}f_{k}g_{j}=\sum_{j=1}^{n}\sum_{k=1}^{n}p_{k}\gamma
_{kj}f_{k}g_{j}=\varphi (f\otimes (\Gamma g))
\end{equation*}
and conversely, if
\begin{equation}
\varphi \lbrack (\Gamma f)\otimes g]=\varphi \lbrack f\otimes (\Gamma g)]
\label{klasfyn}
\end{equation}
holds for all $f,g\in K$, then the detailed balance condition $p_{j}\gamma
_{jk}=p_{k}\gamma _{kj}$ follows easily. So the detailed balance condition
of a system, which says that a transition and its opposite are equally
likely, can be reinterpreted in terms of two copies of the system by saying
that in the correlated state $\varphi $ time-evolution of only the first
copy of the system is equivalent to time-evolution of only the second copy
of the system, i.e. the two systems' time-evolutions are ``balanced'' in
this sense. It is clear from the derivation of Eq. (\ref{klasfyn}) from
detailed balance that the fact that the transition probability $p_{j}\gamma
_{jk}$ is equal to the opposite transition's probability $p_{k}\gamma _{kj}$
in the first system, is translated directly to time-evolution of the second
system. A potentially useful way of thinking about this may be that the
first system is going back in time relative to the second, in the right hand
side of Eq. (\ref{klasfyn}).

The above discussion makes it plausible that also in the quantum case
detailed balance of a system should be related to a correlated state of two
copies of the system. Our next step is to explore this in more detail to
motivate the connection between detailed balance and entanglement. More
precisely, if we attempt to express quantum detailed balance in the form of Eq.
(\ref{klasfyn}), the question is which state of two copies of the quantum
system should be used in place of $\varphi $.

A most direct adaptation of the state $\varphi $ to the quantum case from
the previous section is to consider the following density matrix for two
copies of the quantum system, where as for the classical case above we assign
the probabilities only to pairs consisting of two copies of the same pure
state (i.e. a probability distribution over a ``diagonal'' of 2-system pure
states):
\begin{equation*}
\rho ^{(2)}=\sum_{j=1}^{n}\rho _{j}\left| e_{j}\otimes e_{j}\right\rangle
\left\langle e_{j}\otimes e_{j}\right|
\end{equation*}
where we are working in a basis in which $\rho $ from the previous section
is diagonal, say
\begin{equation}
\rho =\left[
\begin{array}{ccc}
\rho _{1} &  &  \\
& \ddots &  \\
&  & \rho _{n}
\end{array}
\right] ,  \label{dm}
\end{equation}
and with $e_{j}$ the column matrix with $1$ in the $j$ 'th position and $0$
elsewhere for $j=1,...,n$, to give the pairs of states $e_{j}\otimes e_{j}$
referred to above. Then it is easily verified that if we define a
state $\theta $ on $M_{n}\otimes M_{n}$ by
\begin{equation*}
\theta (C)=\tr(\rho ^{(2)}C)
\end{equation*}
for all $C\in M_{n}\otimes M_{n}$, then $\theta (A\otimes I)=\left\langle
A\right\rangle $ and $\theta (I\otimes B)=\left\langle B\right\rangle $ as
required to correspond to the classical case above, and it is also clear
that $\theta $ is a correlated state (as long as more than one of the $\rho
_{j}$ are non-zero) although it contains no entanglement, i.e. the
correlations in $\theta $ are purely classical. One could now try to define
quantum detailed balance, for some time-evolution of the system, in terms of
$\theta $ by using a similar condition as in Eq. (\ref{klasfyn}).

However the question is whether $\theta$ is sufficiently correlated to
produce a good analogue of the classical case. So let us heuristically
compare $\theta $ with $\varphi $ in terms of how correlated they are. Let
us assume that $\rho _{j}\neq 0$ for all $j$, since this is the case that we
are interested later on, and correspondingly we assume that $p_{j}\neq 0$
for all $j$. A very simple way to check that the state $\varphi $ is indeed
quite correlated, is to note that $\varphi (f\otimes f)>0$ for any non-zero
observable $f\in K$, by which we mean $f$ is self-adjoint in $K$, i.e. $f$
is real-valued. But it is easily seen that $\theta$ does not satisfy the
corresponding condition in the quantum case, namely if $A\in M_{n}$ is an
observable (i.e. it is self-adjoint) but all its diagonal entries are zero,
then $\theta (A\otimes A)=0$ even if $A$ is non-zero. This is despite the
fact that we do have $\theta (A\otimes A)\geq 0$ for all observables $A$. In
this sense $\theta$ is heuristically speaking not as correlated for the two
copies of the quantum system as $\varphi$ is for the two copies of the classical
system.

Heuristically, in order to have a quantum version of Eq. (\ref{klasfyn})
which is a good analogue of the classical situation, we need
to require the 2-system state in the quantum situation to be correlated in
the above sense for all observables, as is the case in the classical
situation, rather than just for some observables (namely for observables with
non-zero diagonal entries). But this then means that $\theta$ is not good
enough.

Exactly here entanglement comes to the rescue. Firstly, it is easily
verified that for the entangled state $\omega $ as defined in the previous
section we have $\omega (A\otimes A^{T})\geq 0$ for all observables $A$.
Note that this is not true for $\omega (A\otimes A)$, so this form is not
suitable for looking for correlations in the above sense. For $\theta$ we
have $\theta (A\otimes A^{T})=\theta (A\otimes A)$ so the two forms are
equivalent in the case of $\theta $. The form $\omega (A\otimes A^{T})$ is
the appropriate one to use in the case of $\omega $, and note that indeed
$\omega (A\otimes A^{T})>0$ for any non-zero observable $A\in M_{n}$, in
perfect analogy to the classical case. This suggests that it would be more
natural to use the entangled state $\omega $ in place of $\varphi $, rather
than the non-entangled state $\theta $, if we attempt to express quantum
detailed balance in the form of Eq. (\ref{klasfyn}) in terms of a state
which has a similar degree of correlations for the quantum observables that
$\varphi$ has for classical observables.

Below we use two definitions of quantum detailed balance appearing in the
literature to illustrate this connection with entanglement explicitly.

\section{Definitions of quantum detailed balance}

\label{a3}

We now describe two definitions of quantum detailed balance for which the
connection to the entangled state $\omega$ from Section \ref{a2} can be made
in a particularly clear way.

For a simple and clear discussion of how one can rewrite the classical
definition of detailed balance in a form that suggests the basic form of the
definitions of quantum detailed balance presented below, please refer to
\cite{FR, V}. This gives some intuition regarding the origins of these
definitions. Also see \cite{OSW, Ag, CWA} for some of the early literature on
detailed balance, as well as \cite{M3}. More specific references will be given
as we proceed.

As before we consider a system with $n$ dimensional Hilbert space. We allow
the system to interact with its environment, i.e. it is an open system. A
standard approach to this situation is to model the time-evolution of the
system in the Heisenberg picture as a quantum Markov semigroup (QMS) $\tau_{t}$
on the algebra $M_{n}$, where we take the time variable to be either
continuous, i.e. $t\geq0$, or discrete, i.e. $t=0,1,2,3,...$. This means that
for each $t$ the corresponding $\tau_{t}$ is a completely positive linear map
from $M_{n}$ to itself which is also unital, i.e. $\tau_{t}(I)=I$, and
furthermore the semigroup property $\tau_{s}\tau_{t}=\tau_{s+t}$ is satisfied.
Extensive discussions as to when a QMS is a good approximation to the physical
time-evolution is given for example in the books \cite{BP} and \cite{AL}, but
also see \cite{Da74} for one of the original papers.

It turns out that for the framework presented in this section and the results
discussed in the next, the semigroup property is not needed, so this
assumption can in fact be dropped, which may be relevant when studying
non-Markovian dynamics. We do however keep the rest of the above mentioned
assumptions regarding $\tau_{t}$, in which case we simply refer to $\tau_t$
as \emph{dynamics}. The literature on detailed balance related to our
approach typically assumes the semigroup property.

The first definition of quantum detailed balance we consider is from
\cite{MS}, and is called detailed balance II.
In \cite{MS} the dynamics is only assumed to be positive, rather than
completely positive, and they only consider the case of discrete time. We
therefore adapt their approach to completely positive maps and also to include
continuous time. Our results in the next section in fact still hold when
working with positivity instead of complete positivity, but as is well known
\cite{KGKS} there are convincing physical reasons to assume complete positivity,
and this also happens to be mathematically convenient in many cases.
In this regard also see again the books \cite{BP} and \cite{AL}.
The above mentioned extension from discrete to continuous time on the other hand
is a minor mathematical issue in our setup in this section.
All our arguments in this section, as well as Sections \ref{a4} and \ref{bewyse},
work for both the case of continuous time and the case of discrete time.

We are going to define detailed balance of the dynamics $\tau_{t}$ of the
system relative to a given fixed density matrix $\rho$ of the system. The key
mathematical idea to define and study detailed balance is to consider certain
duals or adjoints of $\tau_{t}$. In particular for detailed balance II we need
the following.

With $\left\langle \cdot\right\rangle $ the expectation functional given by
$\rho$ as in Section \ref{a2}, we can define the \emph{dual} (relative to
$\rho$) of any linear map $\alpha:M_{n}\rightarrow M_{n}$ as the linear map
$\alpha^{\prime}:M_{n}\rightarrow M_{n}$ such that
\[
\left\langle \alpha^{\prime}(A)B\right\rangle =
\left\langle A\alpha(B)\right\rangle
\]
for all $n\times n$ matrices $A$ and $B$. Note that since $\rho$ is
invertible, such an $\alpha^{\prime}$ necessarily exists and is unique, since
it can be obtained from the Hermitian adjoint of $\alpha$ with respect to the
inner product $(A,B)_{\rho}:=\tr(\rho A^{\dagger}B)=\left\langle
A^{\dagger}B\right\rangle $. Indeed, denoting this Hermitian adjoint by
$\alpha^{\rho}$, it is easy to check that
$\alpha^{\prime}(A)=\alpha^{\rho}(A^{\dagger})^{\dagger}$.

\begin{definition}
We say that $\tau_{t}$ as given above satisfies \emph{detailed balance II with
respect to} $\rho$ if $\tau_{t}^{\prime}$ is a completely positive unital
linear map for every $t$.
\end{definition}

As a general remark, note that if $\tau_{t}$ has the semigroup property, then
$\tau_{t}^{\prime}$ automatically has it as well, since
\[
\left\langle \tau_{s+t}^{\prime}(A)B\right\rangle
=\left\langle A\tau_{s+t}(B)\right\rangle
=\left\langle A\tau_{s}[\tau_{t}(B)]\right\rangle
=\left\langle \tau_{t}^{\prime}[\tau_{s}^{\prime}(A)]B\right\rangle .
\]

Note that roughly speaking detailed balance II boils down to requiring that the
dual $\tau_{t}^{\prime}$ is a sensible physical time-evolution.

Next we consider a type of standard quantum detailed balance (see\cite{DFFU},
and also \cite{PA} for related work). The particular form of standard quantum
detailed balance considered below was studied in \cite{FU2, FR2}. It will
immediately be seen that it is defined in a form directly related to the
entangled state $\omega$, a point we come back to in the next section. It is
defined in terms of a \textit{reversing operation}
$\Theta:M_{n}\rightarrow M_{n}$, meaning that $\Theta$ is a
$\ast$-anti-automorphism, i.e. it is linear,
$\Theta(A^{\dagger})=\Theta(A)^{\dagger}$ and $\Theta(AB)=\Theta(B)\Theta(A)$,
and we furthermore assume that $\Theta^{2}$ is the identity map on $M_{n}$.
Note that some form of time reversal plays a central role in a number of
approaches to detailed balance; see for example \cite{Ag, M2}, and also the
discussion in \cite{MS}.

For any linear $\alpha:M_{n}\rightarrow M_{n}$ we define its \textit{KMS-dual}
$\alpha^{(1/2)}:M_{n}\rightarrow M_{n}$ (relative to $\rho$) by
\[
\tr(\rho^{1/2}\alpha^{(1/2)}(A)\rho^{1/2}B)=\tr(\rho^{1/2}A\rho^{1/2}\alpha(B))
\]
for all $n\times n$ matrices $A$ and $B$. We note that $\alpha^{(1/2)}$ exists
and is uniquely determined. In fact it is easily seen to be given by
\[
\alpha^{(1/2)}(A)=\rho^{-1/2}\alpha^{\dagger}(\rho^{1/2}A^{\dagger}\rho
^{1/2})^{\dagger}\rho^{-1/2}
\]
where $\alpha^{\dagger}$ is the Hermitian adjoint of $\alpha$ with respect to
the Hilbert-Schmidt inner product. From this formula it also follows that
$\alpha^{(1/2)}$ is positive if $\alpha$ is, and completely positive if
$\alpha$ is. Furthermore, if $\tau_{t}$ is a QMS, it can be seen that
$\tau_{t}^{(1/2)}$ is as well. However, the semigroup property will again
not be essential for our work.

\begin{definition} We say that $\tau_{t}$ on $M_{n}$ satisfies \emph{standard
quantum detailed balance w.r.t. the reversing operation} $\Theta$ \emph{and the
density matrix} $\rho$, abbreviated as $\Theta$\emph{-sqdb w.r.t. $\rho$}, if
\[
\tau_{t}^{(1/2)}=\Theta\circ\tau_{t}\circ\Theta.
\]
\end{definition}

As the references above and in the introduction shows, there are also a number
of other definitions of quantum detailed balance in the literature. For remarks
comparing some of these definitions, we refer the reader to 
\cite{FU2, MS} in particular.

\section{Detailed balance and entanglement}

\label{a4}

In this section we turn to our main goal, namely to characterize quantum
detailed balance in terms of the entangled state $\omega$ introduced in
Section \ref{a2}. Here we only present the results along with some
discussion, while the technical details regarding their derivations are given
in the next section. As mentioned in Section \ref{a2}, $\rho$ is an
invertible density matrix throughout and we have chosen some fixed basis in
which $\rho$ is diagonal to define the transposition. Furthermore,  the term
\emph{dynamics} is as defined in the previous section.

The central tool towards our goal is the modular operator $\Delta$ defined by
\[
\Delta(A)=\rho A\rho^{-1}
\]
for all $n\times n$ matrices $A$. This operator is part of a very general
theory, namely modular theory or Tomita-Takesaki theory, which is discussed
for example in \cite{BR}, but since we work in finite dimensions we don't need
to delve into the general theory.

We start with the following characterization of detailed balance II in terms of
the modular operator.

\begin{theorem} \label{st1}
The dynamics $\tau_{t}$ satisfies detailed balance II w.r.t.
$\rho$ if and only if it commutes with the modular operator, i.e.
\begin{equation}
\tau_{t}\Delta=\Delta\tau_{t}, \label{kommuteer}
\end{equation}
and it leaves the state $\rho$ invariant in the sense that
\begin{equation}
\left\langle \tau_{t}(A)\right\rangle =\left\langle A\right\rangle
\label{invariant}
\end{equation}
for all $n\times n$ matrices $A$.
\end{theorem}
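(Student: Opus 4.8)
The plan is to work directly with the dual $\tau_{t}'$ and to match its two defining properties---unitality and complete positivity---separately against the two conditions (\ref{invariant}) and (\ref{kommuteer}). The convenient starting point is an explicit formula for the dual in terms of the Hilbert--Schmidt adjoint $\tau_{t}^{\dagger}$ of $\tau_{t}$ (the map with $\tr(\tau_{t}^{\dagger}(X)^{\dagger}Y)=\tr(X^{\dagger}\tau_{t}(Y))$). Unwinding $\langle\tau_{t}'(A)B\rangle=\langle A\tau_{t}(B)\rangle$ gives $\tau_{t}'(A)=\rho^{-1}\tau_{t}^{\dagger}(\rho A)$, and since $\tau_{t}$ is completely positive so is $\tau_{t}^{\dagger}$. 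I would record at the outset the two elementary modular identities $\rho A=\Delta^{1/2}(\rho^{1/2}A\rho^{1/2})$ and $\rho A^{\dagger}=\Delta(A^{\dagger}\rho)$, which drive everything below.

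First I would dispose of unitality, which yields (\ref{invariant}). Putting $A=I$ gives $\tau_{t}'(I)=\rho^{-1}\tau_{t}^{\dagger}(\rho)$, so $\tau_{t}'$ is unital exactly when $\tau_{t}^{\dagger}(\rho)=\rho$. On the other hand $\langle\tau_{t}(A)\rangle=\tr(\tau_{t}^{\dagger}(\rho)A)$ for all $A$, so invariance of the state is likewise equivalent to $\tau_{t}^{\dagger}(\rho)=\rho$. Hence unitality of $\tau_{t}'$ is precisely (\ref{invariant}), with no positivity input required.

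The substance is to match positivity of $\tau_{t}'$ with the modular commutation (\ref{kommuteer}). For the forward implication I would use the fact that a positive map is automatically Hermiticity preserving, so if $\tau_{t}'$ is completely positive---or merely positive---then $\tau_{t}'(A^{\dagger})=\tau_{t}'(A)^{\dagger}$. Using $\tau_{t}'(A^{\dagger})=\rho^{-1}\tau_{t}^{\dagger}(\Delta(A^{\dagger}\rho))$ and $\tau_{t}'(A)^{\dagger}=\tau_{t}^{\dagger}(A^{\dagger}\rho)\rho^{-1}$, Hermiticity preservation is seen to be equivalent to $\tau_{t}^{\dagger}\Delta=\Delta\tau_{t}^{\dagger}$, which is (\ref{kommuteer}) (passing between $\tau_{t}$ and $\tau_{t}^{\dagger}$ is harmless since $\Delta$ is Hilbert--Schmidt self-adjoint). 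For the reverse implication, assuming (\ref{kommuteer}) I would show $\tau_{t}'=\tau_{t}^{(1/2)}$: the identity $\rho A=\Delta^{1/2}(\rho^{1/2}A\rho^{1/2})$, together with $\tau_{t}^{\dagger}\Delta=\Delta\tau_{t}^{\dagger}$ and hence commutation with $\Delta^{1/2}$ by functional calculus on the positive operator $\Delta$, turns $\tau_{t}'(A)=\rho^{-1}\tau_{t}^{\dagger}(\rho A)$ into $\rho^{-1/2}\tau_{t}^{\dagger}(\rho^{1/2}A\rho^{1/2})\rho^{-1/2}=\tau_{t}^{(1/2)}(A)$. Since the KMS-dual $\tau_{t}^{(1/2)}$ is completely positive (Section \ref{a3}), so is $\tau_{t}'$; combined with (\ref{invariant}) this gives detailed balance II.

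The step I expect to be the conceptual crux is the forward extraction of (\ref{kommuteer}) from positivity of $\tau_{t}'$: a priori one wants to squeeze the full geometric content of complete positivity into a commutation relation, which looks hard, and the resolution is to notice that one needs far less---only Hermiticity preservation, which is already equivalent to commuting with $\Delta$. This also explains transparently why the result survives with positivity in place of complete positivity. The remaining care is routine bookkeeping: verifying the two modular identities, the self-adjointness of $\Delta$ on Hilbert--Schmidt space, and that functional calculus lets $\tau_{t}^{\dagger}$ commute with $\Delta^{1/2}$ once it commutes with $\Delta$.
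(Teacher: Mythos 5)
Your proposal is correct and follows essentially the same route as the paper: the explicit formula $\tau_{t}'(A)=\rho^{-1}\tau_{t}^{\dagger}(\rho A)$, the observation that Hermiticity preservation of $\tau_{t}'$ (not full positivity) is what is equivalent to commutation with $\Delta$, and the converse via rewriting $\tau_{t}'$ as $\rho^{-1/2}\tau_{t}^{\dagger}(\rho^{1/2}\,\cdot\,\rho^{1/2})\rho^{-1/2}$ using $\Delta^{1/2}$-commutation. The only cosmetic differences are that you identify this last expression with the KMS-dual $\tau_{t}^{(1/2)}$ to import its complete positivity, where the paper re-derives it from the Kraus representation of $\tau_{t}^{\dagger}$, and you package unitality versus invariance as a single equivalence through $\tau_{t}^{\dagger}(\rho)=\rho$.
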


One direction of this theorem is given in \cite{MS}, namely that Eq.
(\ref{kommuteer}) and (\ref{invariant}) follow from detailed balance, but the
converse is not, though it is closely related to Theorem 6 of \cite{MS}. This
characterization of detailed balance II is one of the ingredients in deriving
the characterization of detailed balance II in terms of the entangled state
$\omega$ presented below.

For any linear map $\alpha:M_{n}\rightarrow M_{n}$ we can define another
linear map $\hat{\alpha}:M_{n}\rightarrow M_{n}$ by
\[
\hat{\alpha}(A)=\alpha^{\prime}(A^{\intercal})^{\intercal}
\]
where $\alpha^{\prime}$ is as defined in Section \ref{a3}. In order to
formulate the characterization of detailed balance II in terms of $\omega$, we
apply this to the dynamics $\tau_{t}$, i.e. we consider
$\hat{\tau}_{t}$ given by
\begin{equation}
\hat{\tau}_{t}(A)=\tau_{t}^{\prime}(A^{\intercal})^{\intercal}\label{hat}
\end{equation}
for all $n\times n$ matrices $A$ and every $t$. Keep in mind that
$\tau_{t}^{\prime}$ and therefore $\hat{\tau}_{t}$ are mathematically
well-defined operators for every $t$. However, it is only under the condition
of detailed balance II that $\tau_{t}^{\prime}$ becomes dynamics, i.e. that it
is unital and completely positive. When this is the case, $\hat{\tau}_{t}$
similarly becomes dynamics (see Section \ref{bewyse}). In certain examples,
similar to those in \cite{RFZ}, but in arbitrary finite dimensions, one can
show using Theorem \ref{st1} and Choi matrices \cite{Ch}
that the dynamics $\hat{\tau}_{t}$ is just the original dynamics $\tau_{t}$,
as opposed to $\tau_{t}^{\prime}$ which in such examples turns out to be in
effect a time-reversal of $\tau_{t}$. This cannot be expected to be true in
general though. Note that since the transpose appears in Eq. (\ref{hat}),
the definition of $\hat{\tau}_{t}$ is basis dependent, so we have made a
specific choice to fit in with our choice of $\omega$ from Section \ref{a2}.
When using the more general construction $\omega_{r}$, one could in principle
explore a corresponding generalization of Eq. (\ref{hat}), but here we deal
exclusively with Eq. (\ref{hat}).

Now we can characterize detailed balance II in terms of entanglement.

\begin{theorem} \label{st2}
The dynamics $\tau_{t}$ satisfies detailed balance II w.r.t.
$\rho$ if and only if
\begin{equation}
\omega\lbrack A\otimes\hat{\tau}_{t}(B)]=\omega\lbrack\tau_{t}(A)\otimes B]
\label{verst}
\end{equation}
for all $n\times n$ matrices $A$ and $B$, and
\begin{equation}
\hat{\tau}_{t}(I)=I , \label{unitaal}
\end{equation}
for every $t$. Alternatively Eq. (\ref{verst}) can be expressed as
\begin{equation}
\omega\circ(\operatorname{id}_{M_{n}}\otimes\hat{\tau}_{t})
=\omega\circ(\tau_{t}\otimes\operatorname{id}_{M_{n}}), \label{altverst}
\end{equation}
i.e. evolving the 2-system by $\operatorname{id}_{M_{n}}\otimes\hat{\tau}_{t}$
has the same effect on the entangled pure state $\omega$ as
$\tau_{t}\otimes\operatorname{id}_{M_{n}}$, where $\operatorname{id}_{M_{n}}$
denotes the identity map on the algebra $M_{n}$.
\end{theorem}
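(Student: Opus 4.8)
The plan is to reduce the pair of conditions (\ref{verst}) and (\ref{unitaal}) to the pair (\ref{kommuteer}) and (\ref{invariant}) of Theorem \ref{st1}, and then invoke that theorem. First I would unpack (\ref{verst}) using the definitions (\ref{omega}) of $\omega$ and (\ref{hat}) of $\hat{\tau}_{t}$. Since $\hat{\tau}_{t}(B)^{\intercal}=\tau_{t}^{\prime}(B^{\intercal})$, the left side becomes $\omega[A\otimes\hat{\tau}_{t}(B)]=\tr(\rho^{1/2}A\rho^{1/2}\tau_{t}^{\prime}(B^{\intercal}))$ while the right side is $\omega[\tau_{t}(A)\otimes B]=\tr(\rho^{1/2}\tau_{t}(A)\rho^{1/2}B^{\intercal})$. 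Writing $C=B^{\intercal}$, which again ranges over all of $M_{n}$, condition (\ref{verst}) reads $\tr(\rho^{1/2}A\rho^{1/2}\tau_{t}^{\prime}(C))=\tr(\rho^{1/2}\tau_{t}(A)\rho^{1/2}C)$ for all $A,C$. Applying the defining relation of the KMS-dual to the left side to shift $\tau_{t}^{\prime}$ onto the $A$-slot, then cancelling the invertible $\rho^{1/2}$, gives $(\tau_{t}^{\prime})^{(1/2)}=\tau_{t}$; since a short computation shows that the KMS-dual is an involution, this is equivalent to the operator identity $\tau_{t}^{\prime}=\tau_{t}^{(1/2)}$.

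The condition (\ref{unitaal}) is easier. As $\hat{\tau}_{t}(I)=\tau_{t}^{\prime}(I)^{\intercal}$ and $I^{\intercal}=I$, it says exactly that $\tau_{t}^{\prime}$ is unital. Setting $A=I$ in the defining relation $\langle\tau_{t}^{\prime}(A)B\rangle=\langle A\tau_{t}(B)\rangle$ of the dual and using the non-degeneracy of $\langle XB\rangle=\tr(\rho XB)$ in $X$ (which holds because $\rho$ is invertible), I would show that $\tau_{t}^{\prime}(I)=I$ is equivalent to $\langle\tau_{t}(B)\rangle=\langle B\rangle$ for all $B$, i.e. to the invariance condition (\ref{invariant}).

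It remains to show that $\tau_{t}^{\prime}=\tau_{t}^{(1/2)}$ is equivalent to the commutation (\ref{kommuteer}), and this is where I expect the real work to lie. The key is the modular relation $\tau_{t}^{\prime}=\Delta^{-1/2}\circ\tau_{t}^{(1/2)}\circ\Delta^{1/2}$ between the two duals, which I would obtain by rewriting $\langle\tau_{t}^{\prime}(A)B\rangle=\tr(\rho^{1/2}\Delta^{1/2}(A)\rho^{1/2}\tau_{t}(B))$ and bringing the right side into KMS-dual form. Given this relation, $\tau_{t}^{\prime}=\tau_{t}^{(1/2)}$ holds if and only if $\tau_{t}^{(1/2)}$ commutes with $\Delta^{1/2}$. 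I would then record two auxiliary facts proved by the same style of computation as in Section \ref{a3}: the KMS-dual reverses composition, $(\alpha\beta)^{(1/2)}=\beta^{(1/2)}\alpha^{(1/2)}$, and $\Delta^{(1/2)}=\Delta^{-1}$. Taking the KMS-dual of (\ref{kommuteer}) then yields $\Delta^{-1}\tau_{t}^{(1/2)}=\tau_{t}^{(1/2)}\Delta^{-1}$, so $\tau_{t}^{(1/2)}$ commutes with $\Delta$. Finally, since $\Delta$ is a positive invertible operator on the Hilbert-Schmidt space $M_{n}$ with positive square root $\Delta^{1/2}(A)=\rho^{1/2}A\rho^{-1/2}$, the spectral theorem gives that commuting with $\Delta$ is the same as commuting with $\Delta^{1/2}$. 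Chaining these equivalences shows (\ref{verst}) $\Leftrightarrow\tau_{t}^{\prime}=\tau_{t}^{(1/2)}\Leftrightarrow$ (\ref{kommuteer}).

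Combining the three reductions, the conjunction of (\ref{verst}) and (\ref{unitaal}) is equivalent to the conjunction of (\ref{kommuteer}) and (\ref{invariant}), and Theorem \ref{st1} then completes the proof. The main obstacle is the bookkeeping in the third step: establishing the modular relation between $\tau_{t}^{\prime}$ and $\tau_{t}^{(1/2)}$ and transferring commutation with $\Delta$ through the order-reversing KMS-dual while keeping track of the half-powers of $\rho$. As a cross-check, the backward implication can be obtained directly without Theorem \ref{st1}: (\ref{verst}) gives $\tau_{t}^{\prime}=\tau_{t}^{(1/2)}$, which is completely positive because $\tau_{t}$ is, while (\ref{unitaal}) supplies unitality, so $\tau_{t}^{\prime}$ is completely positive and unital, i.e. detailed balance II holds.
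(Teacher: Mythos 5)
Your proposal is correct, but it is organized quite differently from the paper's proof, and the comparison is instructive. The paper proves the forward direction by plugging the explicit formula $\tau_{t}^{\prime}(A)=\rho^{-1/2}\tau_{t}^{\dagger}(\rho^{1/2}A\rho^{1/2})\rho^{-1/2}$ (Eq. (\ref{formule}), derived earlier from detailed balance II) into the definition of $\omega$ and computing traces; for the converse it rewrites $\omega(A\otimes B)=\left\langle B^{\intercal}\Delta^{1/2}(A)\right\rangle$, reads off $\tau_{t}\Delta^{1/2}=\Delta^{1/2}\tau_{t}$ directly from Eq. (\ref{verst}), and then invokes Theorem \ref{st1}. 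You instead recognize Eq. (\ref{verst}) as the single operator identity $\tau_{t}^{\prime}=\tau_{t}^{(1/2)}$, prove the modular relation $\tau_{t}^{\prime}=\Delta^{-1/2}\circ\tau_{t}^{(1/2)}\circ\Delta^{1/2}$ between the two duals, and transfer commutation with $\Delta$ through the order-reversing, involutive KMS-dual. This buys two things the paper's proof does not make explicit: it exhibits detailed balance II as equivalent to the coincidence of the $\rho$-dual with the KMS-dual (which illuminates the comparison with $\Theta$-sqdb at the end of Section \ref{a4}), and it yields a direct proof of the backward implication that bypasses Theorem \ref{st1} entirely, since $\tau_{t}^{(1/2)}$ inherits complete positivity from $\tau_{t}$. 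The cost is the extra bookkeeping you acknowledge (involution, anti-multiplicativity, $\Delta^{(1/2)}=\Delta^{-1}$), all of which check out. One small expository gap: in your third step you only explicitly derive one direction of the equivalence between Eq. (\ref{kommuteer}) and the commutation of $\tau_{t}^{(1/2)}$ with $\Delta$; the reverse direction needs a second application of the KMS-dual together with its involutivity, which is immediate but should be stated.
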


Next we consider a similar characterization of $\Theta$-sqdb. The
definition of $\Theta$-sqdb is indeed already in a form that is aligned with
$\omega$. We simply define $\alpha^{\Theta}:M_{n}\rightarrow M_{n}$ by
\[
\alpha^{\Theta}(A)=(\Theta\circ\alpha\circ\Theta(A^{\intercal}))^{\intercal}
\]
for any linear $\alpha:M_{n}\rightarrow M_{n}$. Then one can immediately
reformulate the definition of $\Theta$-sqdb to obtain the following
characterization which is inherent to the work in \cite{FR, FR2, BQ2}:

\begin{proposition} \label{prop1}
The dynamics $\tau_{t}$ satisfies $\Theta$-sqdb w.r.t.
$\rho$ if and only if
\[
\omega\lbrack A\otimes\tau_{t}^{\Theta}(B)]=\omega\lbrack\tau_{t}(A)\otimes B]
\]
for all $n\times n$ matrices $A$ and $B$ and every $t$.
\end{proposition}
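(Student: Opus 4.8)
The plan is to show that the stated equation involving $\omega$ is nothing more than the defining identity of $\Theta$-sqdb, $\tau_t^{(1/2)} = \Theta \circ \tau_t \circ \Theta$, rewritten explicitly. The strategy is therefore to take both sides of the claimed equation, expand them using the concrete formula $\omega(A \otimes B) = \tr(\rho^{1/2} A \rho^{1/2} B^{\intercal})$ from Eq. (\ref{omega}), and reduce the whole statement to an equality of linear maps, which I can then match against the KMS-dual relation.

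First I would expand the right-hand side: $\omega[\tau_t(A) \otimes B] = \tr(\rho^{1/2}\tau_t(A)\rho^{1/2}B^{\intercal})$. For the left-hand side I would substitute the definition $\tau_t^{\Theta}(B) = (\Theta \circ \tau_t \circ \Theta(B^{\intercal}))^{\intercal}$, so that $\omega[A \otimes \tau_t^{\Theta}(B)] = \tr\bigl(\rho^{1/2} A \rho^{1/2} (\tau_t^{\Theta}(B))^{\intercal}\bigr)$, and since the outer transpose in the definition cancels against the transpose introduced by $\omega$, this becomes $\tr\bigl(\rho^{1/2} A \rho^{1/2}\, \Theta \circ \tau_t \circ \Theta(B^{\intercal})\bigr)$. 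The claimed equation for all $A$ and $B$ thus reads
\[
\tr\bigl(\rho^{1/2} A \rho^{1/2}\, (\Theta \circ \tau_t \circ \Theta)(B^{\intercal})\bigr)
= \tr\bigl(\rho^{1/2}\tau_t(A)\rho^{1/2}B^{\intercal}\bigr).
\]
Since $B$ ranges over all of $M_n$, so does $B^{\intercal}$, and I may replace $B^{\intercal}$ by an arbitrary matrix; comparing this with the defining relation of the KMS-dual, namely $\tr(\rho^{1/2}\alpha^{(1/2)}(A)\rho^{1/2}B) = \tr(\rho^{1/2}A\rho^{1/2}\alpha(B))$, I would read off that the identity is exactly $\tau_t^{(1/2)} = \Theta \circ \tau_t \circ \Theta$ once the roles of the two slots are correctly aligned.

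The one point that genuinely requires care, and which I expect to be the main obstacle, is matching the two slots of the trace bilinear form correctly: in the KMS-dual definition the dual map $\alpha^{(1/2)}$ sits in the \emph{first} argument while $\alpha$ sits in the second, whereas in my reduced equation $\tau_t$ appears on $A$ and $\Theta \circ \tau_t \circ \Theta$ appears on the other slot. I would handle this by invoking the symmetry of the form $(A,B) \mapsto \tr(\rho^{1/2}A\rho^{1/2}B)$ under swapping $A$ and $B$ (which holds because $\rho^{1/2}$ is the same factor on both sides and the trace is cyclic), so that the defining relation of $\tau_t^{(1/2)}$ can equivalently be stated with $\tau_t$ in the first slot and $\tau_t^{(1/2)}$ in the second. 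With the slots aligned, the for-all-$A,B$ equality is equivalent to $\Theta \circ \tau_t \circ \Theta = \tau_t^{(1/2)}$, which is precisely $\Theta$-sqdb, and since every step is reversible the equivalence holds in both directions for every $t$, completing the proof.
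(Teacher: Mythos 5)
Your proposal is correct and matches the paper's intent: the paper gives no separate proof of Proposition \ref{prop1}, stating only that it is an immediate reformulation of the definition of $\Theta$-sqdb via Eq. (\ref{omega}), and your computation is precisely that reformulation carried out explicitly (the cancellation of the two transposes, the substitution $B^{\intercal}\mapsto B$, and the slot-swap using the symmetry and nondegeneracy of $(A,B)\mapsto\tr(\rho^{1/2}A\rho^{1/2}B)$, the latter guaranteed by the invertibility of $\rho$, are all sound).
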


A typical choice of $\Theta$ is $\Theta(A)=A^{\intercal}$. In this case
$\tau_{t}^{\Theta}=\tau_{t}$ and the above condition simplifies to
\[
\omega\lbrack A\otimes\tau_{t}(B)]=\omega\lbrack\tau_{t}(A)\otimes B]
\]
so this choice of $\Theta$ seems to fit in naturally with our choice of
$\omega$.

It is straightforward to construct examples of $\Theta$-sqdb in $M_{2}$ where
$\tau_{t}$ does not commute with $\Delta$, unlike the case of
detailed balance II. This aspect of standard quantum detailed balance was
emphasized in for example \cite{FU2}.

On the other hand, should we assume that $\tau_{t}$ does commute with $\Delta$,
one can show that $\Theta$-sqdb implies detailed balance II.

Lastly we mention that all of the results in this section still hold if we work
in terms of positivity instead of complete positivity, as discussed in the
previous section.

\section{Proofs} \label{bewyse}

Here we prove the results presented in Section \ref{a4}. We begin by discussing
a number of mathematical facts which will be of use in the proofs.

Given any linear map $\alpha:M_{n}\rightarrow M_{n}$ we define the linear map
$\alpha^{\ddagger}:M_{n}\rightarrow M_{n}$ by
\[
\tr[\alpha^{\ddagger}(A)B]=\tr[A\alpha(B)].
\]
Notice that it is a version of the dual $\alpha^{\prime}$, but w.r.t. the
trace instead of $\left\langle \cdot\right\rangle $. Similar to $\alpha
^{\prime}$, $\alpha^{\ddagger}$ can be obtained from the usual Hermitian
adjoint $\alpha^{\dagger}$ of the operator $\alpha$ with respect to the
Hilbert-Schmidt inner product by the formula
\[
\alpha^{\ddagger}(A)=\alpha^{\dagger}(A^{\dagger})^{\dagger}
\]
where the last $\dagger$ refers to the Hermitian adjoint of the $n\times n$
matrix $\alpha^{\dagger}(A^{\dagger})$. Note that $\alpha(I)=I$ if and only if
$\tr\circ\alpha^{\ddagger}=\tr$. It is similarly
easy to see that $\left\langle \alpha(A)\right\rangle =\left\langle
A\right\rangle $ for all $A$ if and only if $\alpha^{\ddagger}(\rho)=\rho$. In
the case that $\alpha$ is a Hermitian map, i.e. it satisfies $\alpha
(A^{\dagger})=\alpha(A)^{\dagger}$, we see that $\alpha^{\dagger}$ is also
Hermitian, since
\begin{align*}
\tr\left[  \alpha^{\dagger}(A^{\dagger})B\right]   &
=\tr\left[  A^{\dagger}\alpha(B)\right]
=\{\tr[\alpha(B^{\dagger})A]\}^{\ast}
=\{\tr[B^{\dagger}\alpha^{\dagger}(A)]\}^{\ast}\\  &
=\tr\left[  \alpha^{\dagger}(A)^{\dagger}B\right]  .
\end{align*}
Therefore
\[
\alpha^{\ddagger}=\alpha^{\dagger}
\]
if $\alpha$ is Hermitian. So in our physical context we in fact only need to
work with $\alpha^{\dagger}$, since positive maps are Hermitian.
Mathematically it will however be convenient to consider $\alpha^{\ddagger}$
as well.

From the definition $\Delta(A)=\rho A\rho^{-1}$ of the modular operator
$\Delta$, it is easily verified that $\Delta^{\dagger}=\Delta$, where again
the Hermitian adjoint $\Delta^{\dagger}$ is taken with respect to the
Hilbert-Schmidt inner product. I.e. $\Delta$ is self-adjoint, and similarly
$\Delta^{1/2}=\rho^{1/2}(\cdot)\rho^{-1/2}$ is self-adjoint. The latter means
that $\Delta=\Delta^{1/2}\Delta^{1/2}\geq0$. Furthermore,
$\Delta^{-1}=\rho^{-1}(\cdot)\rho$ exists so all of the eigenvalues of $\Delta$
are strictly positive, so in fact
\[
\Delta>0
\]
as an operator on the Hilbert space $M_{n}$ with the Hilbert-Schmidt norm.
This means that $\Delta^{-iz}$ is well-defined for all $z\in\mathbb{C}$. We
consider $\Delta^{-iz}$ rather than $\Delta^{z}$ as a convention, since then
in the case of a Gibbs state and real $z$ it follows that $\Delta^{-iz}$ is
essentially a scaled version of the system's isolated dynamics; see Eq.
(\ref{modgroep}) below.

A convenient and standard representation of a linear map
$\alpha :M_{n}\rightarrow M_{n}$, for example $\Delta$ above, is to arrange the
columns of an $n\times n$ matrix in order below one another in an $n^{2}$
dimensional column, in which case $\alpha$ can be written as an
$n^{2}\times n^{2}$ matrix. This is just a choice of basis, and is essentially
an explicit case of the GNS construction with respect to the trace (see for
example \cite{BR} for the general GNS construction). In this representation
$\alpha^{\dagger}$ is then easily seen to be represented by the Hermitian
adjoint of the $n^{2}\times n^{2}$ matrix (i.e. transpose and complex
conjugation).

Since we are working in a basis in which $\rho$ is diagonal, as mentioned in
Section \ref{a2}, namely Eq. (\ref{dm}), it follows that in the above mentioned
representation,
\begin{equation}
\Delta=\left[
\begin{array}
[c]{ccc}
\left[
\begin{array}
[c]{ccc}
\rho_{1}\rho_{1}^{-1} &  & \\
& \ddots & \\
&  & \rho_{n}\rho_{1}^{-1}
\end{array}
\right]  &  & \\
& \ddots & \\
&  & \left[
\begin{array}
[c]{ccc}
\rho_{1}\rho_{n}^{-1} &  & \\
& \ddots & \\
&  & \rho_{n}\rho_{n}^{-1}
\end{array}
\right]
\end{array}
\right]  \label{modop}
\end{equation}
where we have indicated $n\times n$ blocks for clarity. From this we see that
\begin{equation}
\Delta^{-iz}(A)=\rho^{-iz}A\rho^{iz}. \label{modgroep}
\end{equation}

Now we turn to the proofs of the results of the previous section. The first
step is the following:

Assuming that the dynamics $\tau_{t}$ of our system satisfies detailed
balance II w.r.t. $\rho$ as described in Section \ref{a3}, it follows that
\begin{equation}
\tau_{t}^{\prime}(A)=
\rho^{-1/2}\tau_{t}^{\dagger}(\rho^{1/2}A\rho^{1/2})\rho^{-1/2}
\label{formule}
\end{equation}
for all $n\times n$ matrices $A$, where $\tau_{t}^{\dagger}$ denotes the
Hermitian adjoint of $\tau_{t}$ with respect to the Hilbert-Schmidt inner
product.

The derivation of Eq. (\ref{formule}) is given in \cite{MS}, but we provide it
here for completeness in slightly more elementary
form, which is possible since we are working in finite dimensions. We in fact
prove something a bit more general than Eq. (\ref{formule}); see Eq.
(\ref{duaal2}). Along the way we prove some general results which will be used
in the subsequent proofs as well.

For any linear $\alpha:M_{n}\rightarrow M_{n}$ we have
$\left\langle\alpha^{\prime}(A)B\right\rangle
=\left\langle A\alpha(B)\right\rangle
=\tr[\alpha^{\ddagger}(\rho A)B]
=\left\langle \rho^{-1}\alpha^{\ddagger}(\rho A)B\right\rangle $,
therefore
\begin{equation}
\alpha^{\prime}(A)=\rho^{-1}\alpha^{\ddagger}(\rho A) \label{duaal}
\end{equation}
Furthermore,
\begin{align*}
\left\langle A\alpha(B)\right\rangle  &
=\left\langle \alpha^{\prime}(A)B\right\rangle
=\tr[B\rho\alpha^{\prime}(A)]
=\tr[\alpha^{\prime\ddagger}(B\rho)A]
=\tr[\rho A\alpha^{\prime\ddagger}(B\rho)\rho^{-1}]\\  &
=\left\langle A\alpha^{\prime\ddagger}(B\rho)\rho^{-1}\right\rangle
\end{align*}
so $\alpha(B)=\alpha^{\prime\ddagger}(B\rho)\rho^{-1}$, i.e. $\alpha
^{\prime\ddagger}(B\rho)=\alpha(B)\rho$. Assuming that $\alpha$ and
$\alpha^{\prime}$ are Hermitian, it follows that
$\alpha^{\ddagger}=\alpha^{\dagger}$ and
$\alpha^{\prime\ddagger}=\alpha^{\prime\dagger}$ are also Hermitian, therefore
we also have
$\alpha^{\prime\dagger}(\rho B)=\rho\alpha(B)$.
Hence
\begin{align*}
\left\langle A\alpha(B)\right\rangle  &
=\tr[\rho A\rho^{-1}\alpha^{\prime\dagger}(\rho B)]
=\tr[\alpha^{\prime}(\rho A\rho^{-1})\rho B]
=\left\langle \alpha^{\prime}(\rho A\rho^{-1})\rho B\rho^{-1}\right\rangle \\  &
=\left\langle A\rho^{-1}\alpha(\rho B\rho^{-1})\rho\right\rangle
\end{align*}
from which it follows that $\alpha(B)=\rho^{-1}\alpha(\rho B\rho^{-1})\rho$.

I.e. we have shown that
\begin{equation}
\alpha\Delta=\Delta\alpha\label{kommuteer2}
\end{equation}
if both $\alpha$ and $\alpha^{\prime}$ are Hermitian. But then it follows that
$\alpha\Delta^{-iz}=\Delta^{-iz}\alpha$, thinking in terms of operators on the
Hilbert space $M_{n}$, in other words
\begin{equation}
\alpha(\rho^{-iz}A\rho^{iz})=\rho^{-iz}\alpha(A)\rho^{iz} \label{kommuteer3}
\end{equation}
according to Eq. (\ref{modgroep}). This implies that
\begin{align*}
\tr[\alpha^{\dagger}(\rho^{iz}A\rho^{-iz})B]  &
=\tr[A\rho^{-iz}\alpha(B)\rho^{iz}]
=\tr[A\alpha(\rho^{-iz}B\rho^{iz})]\\  &
=\tr[\rho^{iz}\alpha^{\dagger}(A)\rho^{-iz}B]
\end{align*}
hence
$\alpha^{\dagger}(\rho^{iz}A\rho^{-iz})=\rho^{iz}\alpha^{\dagger}(A)\rho^{-iz}$.
In particular
$\rho^{1/2}\alpha^{\dagger}(A)\rho^{-1/2}=
\alpha^{\dagger}(\rho^{1/2}A\rho^{-1/2})$
so
$\rho^{1/2}\alpha^{\dagger}(\rho^{1/2}A\rho^{1/2})\rho^{-1/2}
=\alpha^{\dagger}(\rho A)=\rho\alpha^{\prime}(A)$,
where in the last equality we used Eq. (\ref{duaal}).

We have therefore shown that
\begin{equation}
\alpha^{\prime}(A)
=\rho^{-1/2}\alpha^{\dagger}(\rho^{1/2}A\rho^{1/2})\rho^{-1/2} \label{duaal2}
\end{equation}
for any Hermitian linear $\alpha:M_{n}\rightarrow M_{n}$ for which
$\alpha^{\prime}$ is also Hermitian.

In particular this means that Eq. (\ref{formule}) holds when $\tau_{t}$
satisfies detailed balance w.r.t. $\rho$, since then $\tau_{t}$ and
$\tau_{t}^{\prime}$ are both positive, and therefore Hermitian.

\emph{Proof of Theorem \ref{st1}.} Assume that $\tau_{t}$ satisfies detailed
balance II w.r.t. $\rho$. Then Eq. (\ref{kommuteer}) follows from
Eq. (\ref{kommuteer2}). Furthermore, Eq. (\ref{invariant}) holds, since
$\left\langle \tau_{t}(A)\right\rangle
=\left\langle \tau_{t}^{\prime}(I)A\right\rangle
=\left\langle A\right\rangle $
directly from the definition of $\tau_{t}^{\prime}$ and detailed balance II.

Now for the converse. First note that for a linear map
$\alpha:M_{n}\rightarrow M_{n}$
we have that $\alpha$ is completely positive if and only if
$\alpha^{\dagger}$ is completely positive. This follows immediately from the
definition of $\alpha^{\dagger}$ and the fact \cite{KGKS, S} that a linear map
$\varphi:M_{n}\rightarrow M_{n}$ is completely positive if and only if it can be
written in the form
\[
\varphi(A)=\sum_{j=1}^{n^{2}}V_{j}AV_{j}^{\dagger}
\]
for all $A$, for some set of matrices $V_{j}\in M_{n}$. (It can also be shown
by a slightly longer argument that $\alpha$ is positive if and only if
$\alpha^{\dagger}$ is positive.)

Assuming Eq. (\ref{kommuteer}) and Eq. (\ref{invariant}), we define
$\varphi_{t}:M_{n}\rightarrow M_{n}$ by
\[
\varphi_{t}(A)=\rho^{-1/2}\tau_{t}^{\dagger}(\rho^{1/2}A\rho^{1/2})\rho
^{-1/2},
\]
from which follows that
\begin{align*}
\left\langle \varphi_{t}(A)B\right\rangle  &
=\tr[\rho\rho^{-1/2}\tau_{t}^{\dagger}(\rho^{1/2}A\rho^{1/2})\rho^{-1/2}B]
=\tr[\rho^{1/2}A\rho^{1/2}\tau_{t}(\rho^{-1/2}B\rho^{1/2})]\\  &
=\left\langle A\rho^{1/2}\tau_{t}(\rho^{-1/2}B\rho^{1/2})\rho^{-1/2}\right\rangle
=\left\langle A\tau_{t}(B)\right\rangle
\end{align*}
where in the last step we applied
$\tau_{t}(\rho^{iz}A\rho^{-iz})=\rho^{iz}\tau_{t}(A)\rho^{-iz}$
which follows from Eq. (\ref{kommuteer}) just like
Eq. (\ref{kommuteer3}) followed from Eq. (\ref{kommuteer2}). This shows that
$\tau_{t}^{\prime}=\varphi_{t}$, i.e.
\begin{equation}
\tau_{t}^{\prime}(A)=
\rho^{-1/2}\tau_{t}^{\dagger}(\rho^{1/2}A\rho^{1/2})\rho^{-1/2},
\label{formule2}
\end{equation}
from which we conclude that $\tau_{t}^{\prime}$ is completely positive, since
$\tau_{t}$ and therefore $\tau_{t}^{\dagger}$ are. (Similarly,
$\tau_{t}^{\prime}$
is positive if we only assume that $\tau_{t}$ is positive.)
Furthermore
\[
\left\langle \tau_{t}^{\prime}(I)A\right\rangle
=\left\langle \tau_{t}(A)\right\rangle
=\left\langle A\right\rangle ,
\]
implying that $\tau_{t}^{\prime}$ is unital. This shows that $\tau_{t}$
satisfies detailed balance II w.r.t. $\rho$ as required.

\emph{Proof of Theorem \ref{st2}.} Assume that $\tau_{t}$ satisfies detailed
balance II w.r.t. $\rho$. Then Eq. (\ref{formule}) holds as already shown
above, so by also using Eq. (\ref{omega}) and Eq. (\ref{hat}) it follows that
\begin{align*}
\omega\lbrack A\otimes\hat{\tau}_{t}(B)]  &
=\tr[\rho^{1/2}A\rho^{1/2}\tau_{t}^{\prime}(B^{\intercal})]\\  &
=\tr[\rho^{1/2}A\rho^{1/2}
 \rho^{-1/2}\tau_{t}^{\dagger}(\rho^{1/2}B^{\intercal}\rho^{1/2})\rho^{-1/2}]\\  &
=\tr[\tau_{t}(A)\rho^{1/2}B^{\intercal}\rho^{1/2}]
=\omega\lbrack\tau_{t}(A)\otimes B],
\end{align*}
i.e. Eq. (\ref{verst}) holds. Since $\tau_{t}^{\prime}(I)=I$ because of
detailed balance II, we also have Eq. (\ref{unitaal}) by Eq. (\ref{hat}).

Conversely, assuming Eqs. (\ref{verst}) and (\ref{unitaal}), we are going to
use Theorem \ref{st1} to show that $\tau_{t}$ satisfies detailed balance II
w.r.t. $\rho$. Since
\[
\omega(A\otimes B)
=\tr(\rho B^{\intercal}\rho^{1/2}A\rho^{-1/2})
=\left\langle B^{\intercal}\Delta^{1/2}(A)\right\rangle ,
\]
we have by our assumption Eq. (\ref{verst}) that
\begin{align*}
\left\langle B^{\intercal}\tau_{t}[\Delta^{1/2}(A)]\right\rangle  &
=\left\langle \tau_{t}^{\prime}(B^{\intercal})\Delta^{1/2}(A)\right\rangle
=\omega\lbrack A\otimes\tau_{t}^{\prime}(B^{\intercal})^{\intercal}]
=\omega\lbrack A\otimes\hat{\tau}_{t}(B)]\\  &
=\omega\lbrack\tau_{t}(A)\otimes B]
=\left\langle B^{\intercal}\Delta^{1/2}[\tau_{t}(A)]\right\rangle
\end{align*}
which means that $\tau_{t}\Delta^{1/2}=\Delta^{1/2}\tau_{t}$, hence
$\tau_{t}\Delta=\Delta\tau_{t}$. Furthermore,
\[
\left\langle \tau_{t}(A)\right\rangle
=\omega\lbrack\tau_{t}(A)\otimes I]
=\omega\lbrack A\otimes\hat{\tau}_{t}(I)]
=\left\langle A\right\rangle ,
\]
since we assumed that $\hat{\tau}_{t}(I)=I$. The conditions in Theorem \ref{st1}
are therefore satisfied, implying that $\tau_{t}$ satisfies detailed balance II
w.r.t. $\rho$, completing the proof of Theorem \ref{st2}.

\emph{Remarks regarding }$\hat{\tau}_{t}$\emph{ as dynamics.} Note
that for a linear map $\alpha:M_{n}\rightarrow M_{n}$ we have that $\alpha$ is
completely positive if and only if $\bar{\alpha}$ is completely positive, where
$\bar{\alpha}$ is defined by $\bar{\alpha}(A)=\alpha(A^{\intercal})^{\intercal}$
in terms of the transposition in our chosen basis as discussed in Section \ref{a2}.
This again follows from the representation of completely positive maps used in
the proof of Theorem \ref{st1}. In particular it then follows from
$\hat{\tau}_{t}(A)=\tau_{t}^{\prime}(A^{\intercal})^{\intercal}$ that
$\hat{\tau}_{t}$ is completely positive if $\tau_{t}^{\prime}$ is.
(Since transposition is a positive map, the corresponding results in terms of
positivity instead of complete positivity also hold.)
Clearly $\hat{\tau}_{t}$ is unital if $\tau_{t}^{\prime}$ is. Should we work
with the case where $\tau_{t}$ has the semigroup property,
then $\tau_{t}^{\prime}$ has the semigroup property as well, as explained in
Section \ref{a3}, from which it is easily seen that $\hat{\tau}_{t}$ also has
the semigroup property.

\section{Thermofield dynamics}

\label{tvd}

The characterization of detailed balance in terms of the entangled state $\omega$
turns out to fit naturally into the framework of thermofield dynamics and in this
section our goal is to show this. Our first step is to briefly outline some
of the basic elements of thermofield dynamics in a finite dimensional set-up.

Thermofield dynamics was developed in \cite{TU}, although a number of the key
ideas already appeared in \cite{AW, HHW}. A very useful discussion of
thermofield dynamics can be found in \cite{U}. The formulation in terms of
operator algebras was presented in \cite{O}, and reviewed in \cite{LV}. Our
exposition is largely based on the latter two sources, but adapted to our
setting.

The basic idea is to double the degrees of freedom of the system in the sense
that for each element $A$ of the system's observable algebra $M_{n}$ we define
an element $\tilde{A}$ of the commutant of $M_{n}$ in a cyclic representation
given by the GNS construction for the faithful state
$\left\langle \cdot\right\rangle$ on $M_{n}$ given by $\rho$. This element has
to satisfy a basic identity of thermofield dynamics called the tilde substitution
rule, namely
\[
\Delta^{-1/2}(\tilde{A}\rho^{1/2})=A^{\dagger}\rho^{1/2}
\]
for all $A\in M_{n}$. We need to find $\tilde{A}$ explicitly in a convenient
representation. There are different, though unitarily equivalent, ways of
writing the cyclic representation. For our purposes in this section it is most
convenient to first represent $M_{n}$ by $M_{n}\otimes I$, as a subalgebra of
$M_{n}\otimes M_{n}$, in which case its commutant is given by $I\otimes M_{n}$.
Furthermore, using our faithful representation $\pi$ of $M_{n}\otimes M_{n}$
from Section \ref{a2}, we obtain the cyclic representation of $M_{n}$
we are going to use, namely $A\mapsto\pi(A\otimes I)$, the cyclic vector being
$\rho^{1/2}$ in the Hilbert space $M_{n}$ with Hilbert-Schmidt norm. Note that
$(\rho^{1/2}|\pi(A\otimes I)\rho^{1/2})=\left\langle A\right\rangle$ as is
required of a cyclic representation associated to
$\left\langle \cdot\right\rangle$. It is then a simple matter to verify that
the tilde substitution rule above is satisfied exactly when we set
\[
\tilde{A}=\pi(I\otimes\bar{A})
\]
for all $A\in M_{n}$, where $\bar{A}$ is the complex conjugate of $A$, i.e.
each entry of $A$ is replaced by its complex conjugate. Indeed, we then have
\[
\Delta^{-1/2}(\tilde{A}\rho^{1/2})
=\rho^{-1/2}(\rho^{1/2}\bar{A}^{\intercal})\rho^{1/2}
=A^{\dagger}\rho^{1/2}
\]
as required.

It is also clear from the latter that the tilde substitution rule is in fact
simply an alternative way to write the definition of the modular operator
$\Delta$. Moreover, as one might expect from the fact that $\tilde{A}$ lies in
the commutant, it can alternatively be obtained from the modular conjugation of
Tomita-Takesaki theory \cite{O, LV}. Therefore thermofield dynamics is in a
sense contained in Tomita-Takesaki theory.

From a more physical point of view one can keep in mind that the KMS condition
can be written as
\[
\left\langle A\Delta(B)\right\rangle =\left\langle BA\right\rangle
\]
for all $A,B\in M_{n}$, and this is yet another way of writing the definition
of $\Delta$. So the tilde substitution rule is in effect simply a way to write
the KMS condition, i.e. to express thermal equilibrium.

Another core aspect of thermofield dynamics is the fact that
\[
\left\langle A\right\rangle=\omega(A\otimes I)
\]
as in Section \ref{a2}, i.e. expectation values for the mixed state
$\left\langle \cdot\right\rangle $, that is to say $\rho$, can be expressed in
terms of the pure state $\omega$.

This summarizes the main points from thermofield dynamics that are relevant
for us. Further background, motivation and applications can be found in the
references mentioned above. We now proceed to study detailed balance in this
framework. To do this, it is convenient to extend the definition in Section
\ref{a2} of the expectation functional $\left\langle \cdot\right\rangle$ to the
algebra $\pi(M_n\otimes M_n)$ in the following way that fits in neatly with the
thermofield dynamics framework:
\[
\left\langle A\tilde{B}\right\rangle =(\rho^{1/2}|A\tilde{B}\rho^{1/2})
\]
for all $A,B\in M_{n}$, where we have written $A$ as
shorthand for $\pi(A\otimes I)$, which is natural, since $\pi(A\otimes I)X=AX$
for any $X\in M_{n}$. The point of this is that it can also be rewritten as
\[
\left\langle A\tilde{B}\right\rangle =\omega(A\otimes\bar{B})
\]
which will allow us to write our entanglement characterizations of detailed
balance from Section \ref{a4} easily in the framework of thermofield dynamics.

We now have the following:

\begin{theorem}
Consider dynamics $\tau_{t}$ as described in Section \ref{a3}.

(a) The dynamics $\tau_{t}$ satisfies detailed balance II w.r.t. $\rho$ if and
only if
\begin{equation}
\left\langle \tau_{t}(A)\tilde{B}\right\rangle =
\left\langle A\widetilde{\tau_{t}^{\prime}(B)}\right\rangle \label{tvdkar}
\end{equation}
for all $A,B\in M_{n}$, and
\[
\tau_{t}^{\prime}(I)=I
\]
for every $t$.

(b) The dynamics $\tau_{t}$ satisfies $\Theta$-sqdb w.r.t. $\rho$ if and only
if
\[
\left\langle \tau_{t}(A)\tilde{B}\right\rangle =
\left\langle A[\Theta\circ\tau_{t}\circ\Theta(B)]^{\symbol{126}}\right\rangle
\]
for all $A,B\in M_{n}$, where $[\cdot]^{\symbol{126}}$ means we apply the
tilde to the contents of $[\cdot]$.
\end{theorem}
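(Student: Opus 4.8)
The plan is to translate both characterizations from Section~\ref{a4} into the thermofield language through the single dictionary identity $\langle A\tilde B\rangle=\omega(A\otimes\bar B)$, together with the elementary matrix identities $\bar M=(M^{\dagger})^{\intercal}$ and $\overline{M^{\intercal}}=M^{\dagger}$. Applying the dictionary to each side, the thermofield identity in (a) becomes $\omega(\tau_t(A)\otimes\bar B)=\omega(A\otimes\overline{\tau_t'(B)})$, and the one in (b) becomes $\omega(\tau_t(A)\otimes\bar B)=\omega(A\otimes\overline{\Theta\circ\tau_t\circ\Theta(B)})$. Since $B\mapsto\bar B$ is a bijection of $M_{n}$, I would then compare these with the characterizations in Theorem~\ref{st2} and Proposition~\ref{prop1}, where the relevant maps are $\hat\tau_t(C)=\tau_t'(C^{\intercal})^{\intercal}$ and $\tau_t^{\Theta}(C)=(\Theta\circ\tau_t\circ\Theta(C^{\intercal}))^{\intercal}$.

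For part (b) this comparison closes immediately. Replacing $B$ by $\bar B$ in Proposition~\ref{prop1} and using $(\bar B)^{\intercal}=B^{\dagger}$ and $\bar M=(M^{\dagger})^{\intercal}$, the required match reduces to the single identity $\tau_t^{\Theta}(\bar B)=\overline{\Theta\circ\tau_t\circ\Theta(B)}$, i.e. $(\Theta\circ\tau_t\circ\Theta(B^{\dagger}))^{\intercal}=((\Theta\circ\tau_t\circ\Theta(B))^{\dagger})^{\intercal}$. This holds for all $B$ precisely because $\Theta\circ\tau_t\circ\Theta$ is a Hermitian map, which in turn follows from $\Theta$ being a $\ast$-anti-automorphism and $\tau_t$ being positive; no further hypotheses are needed. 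Hence the thermofield identity in (b) is literally the same condition as in Proposition~\ref{prop1}, and (b) follows in both directions at once.

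For part (a) the analogous match needs $\hat\tau_t(\bar B)=\overline{\tau_t'(B)}$, which reduces to $\tau_t'(B^{\dagger})=\tau_t'(B)^{\dagger}$, i.e. to the Hermiticity of $\tau_t'$. In the forward direction this is free: detailed balance~II makes $\tau_t'$ completely positive, hence Hermitian, so the thermofield identity coincides with the condition of Theorem~\ref{st2} (while $\tau_t'(I)=I$ is exactly $\hat\tau_t(I)=I$), and the claim follows. The converse is the one place requiring care, since a priori $\tau_t'$ need not be Hermitian when $\tau_t$ merely is, so one cannot force a match with Theorem~\ref{st2} outright. Here I would bypass Theorem~\ref{st2} and instead rewrite the thermofield identity in the trace form $\tr(\rho^{1/2}\tau_t(A)\rho^{1/2}B^{\dagger})=\tr(\rho^{1/2}A\rho^{1/2}\tau_t'(B)^{\dagger})$; moving $\tau_t$ onto the other factor via $\tau_t^{\ddagger}=\tau_t^{\dagger}$, stripping the outer $\rho^{1/2}$'s, and taking a Hermitian adjoint then yields exactly Eq.~(\ref{formule2}), namely $\tau_t'(A)=\rho^{-1/2}\tau_t^{\dagger}(\rho^{1/2}A\rho^{1/2})\rho^{-1/2}$.

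Once Eq.~(\ref{formule2}) is in hand the converse is immediate: its right-hand side is a composition of completely positive maps (conjugation by $\rho^{\pm1/2}$ and $\tau_t^{\dagger}$, the latter completely positive because $\tau_t$ is), so $\tau_t'$ is completely positive, and it is also visibly Hermitian; together with the assumed $\tau_t'(I)=I$ this is precisely detailed balance~II. I therefore expect the only genuine obstacle to be the converse of (a): recognizing that the thermofield identity directly encodes Eq.~(\ref{formule2}), rather than attempting to invoke Theorem~\ref{st2} before Hermiticity of $\tau_t'$ is known. All remaining steps are the routine bookkeeping of the transpose/conjugate identities above.
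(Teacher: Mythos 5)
Your proposal is correct, and parts (b) and the forward direction of (a) run essentially parallel to the paper; the genuine divergence is in the converse of (a). The paper reduces the whole theorem to Section \ref{a4} in one uniform step: it takes the complex conjugate of the scalar identity (\ref{tvdkar}), using $\overline{\omega(X\otimes Y)}=\omega(X^{\dagger}\otimes Y^{\dagger})$ together with $\overline{M}^{\dagger}=M^{\intercal}$ and $\tau_{t}(A)^{\dagger}=\tau_{t}(A^{\dagger})$, so that (\ref{tvdkar}) is shown to be \emph{unconditionally} equivalent to (\ref{verst}) --- only Hermiticity of $\tau_{t}$ (automatic from positivity) is needed, never Hermiticity of $\tau_{t}^{\prime}$. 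The obstruction you identify is therefore an artifact of your chosen matching: substituting $B\mapsto\bar{B}$ inside Theorem \ref{st2} forces you to commute $\hat{\tau}_{t}$ past complex conjugation, which does require $\tau_{t}^{\prime}(B^{\dagger})=\tau_{t}^{\prime}(B)^{\dagger}$, whereas conjugating the whole scalar equation instead moves the dagger onto $B$ and onto the output of $\tau_{t}^{\prime}$ purely algebraically. Your workaround --- rewriting (\ref{tvdkar}) in trace form and deriving Eq.~(\ref{formule2}) directly, then reading off complete positivity of $\tau_{t}^{\prime}$ from the Kraus structure of the right-hand side --- is valid (the steps all check out, including the use of $\tau_{t}^{\ddagger}=\tau_{t}^{\dagger}$ and the Hermiticity of $\tau_{t}^{\dagger}$ when taking the final adjoint), and it has the minor merit of making the converse self-contained, bypassing Theorems \ref{st1} and \ref{st2} entirely. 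What it costs is uniformity and length: the paper's conjugation trick disposes of (a) and (b) identically in two lines, while your argument splits into three cases and re-proves a piece of the Section \ref{bewyse} machinery.
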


\begin{proof}
Note that Eq. (\ref{tvdkar}) is equivalent to $\omega(\tau_{t}(A)\otimes
\bar{B})=\omega(A\otimes\overline{\tau_{t}^{\prime}(B)})$. Taking the complex
conjugate of this, we see that it is in turn equivalent to $\omega
(\tau_{t}(A^{\dagger})\otimes B^{\intercal})=\omega(A^{\dagger}\otimes\tau
_{t}^{\prime}(B)^{\intercal})$, since $\tau_{t}(A)^{\dagger}=\tau
_{t}(A^{\dagger})$. So we have shown that Eq. (\ref{tvdkar}) is equivalent to
Eq. (\ref{verst}). The rest of the proof of this theorem is now
straightforward from the results of Section \ref{a4}.
\end{proof}

This theorem shows that the entanglement characterizations of detailed balance 
in Section \ref{a4} fit naturally into the framework of thermofield dynamics.

\section{Discussion}

\label{a6}

One may ask what the most fruitful ways are
to motivate formulations of quantum detailed balance on direct
physical grounds. Possibly characterization of detailed balance in terms
of an entangled state can provide an alternative quantum mechanical foundation
for, and interpretation of, detailed balance in terms of entanglement.

This would be in line with recent work where foundational aspects of
statistical mechanics are studied and motivated directly in terms of
entanglement \cite{GL}. This approach to the
foundations of statistical mechanics appears promising, so despite the more
traditional arguments in favour of the various quantum formulations of
detailed balance, attempting to motivate it from the perspective of
entanglement may prove fruitful. Possibly it could also give a wider
perspective on detailed balance, considering that here we only used a very
specific entangled state, while in principle one could consider conditions as
in Theorem \ref{st2} and Proposition \ref{prop1} with respect to more general 
entangled states. We hope that the connection between detailed balance and 
entanglement considered in this paper can further such studies.

\section*{Acknowledgement}

This research was supported by the National Research Foundation of South Africa.
We thank the referees as well as W. A. Majewski for suggestions to improve the
original version of the paper.

\end{document}